\newcommand*{\ARXIV}{}%  % when uncommented it uses the *.bbl file 
\newcommand*{\JSA}{}%  % 
\newcommand{\tr}[1]{\textrm{#1}}
\newcommand{\mr}[1]{\mathrm{#1}}
\newcommand{\tnr}[1]{{\textnormal{#1}}}
\newcommand{\mc}[1]{\mathcal{#1}}
\newcommand{\ms}[1]{\mathds{#1}}
\newcommand{\ov}[1]{\overline{#1}}
\newcommand{\un}[1]{\underline{#1}}
\newcommand{\brho}{\boldsymbol{\rho}}
\newcommand{\secref}[1]{Sec.~\ref{#1}}
\newcommand{\exref}[1]{Example~\ref{#1}}
\newcommand{\lemref}[1]{Lemma~\ref{#1}}
\newcommand{\tabref}[1]{Table~\ref{#1}}
\newcommand{\ie}{i.e.,~} 		%note the comma and the space (~)
\newcommand{\eg}{e.g.,~}	%note the comma and the space (~)
\newcommand{\argmax}{\mathop{\mr{argmax}}}
\newcommand{\set}[1]{\{#1\}}
\newcommand{\SET}[1]{\left\{#1\right\}}
\newcommand{\cd}{\cdot}
\newcommand{\ld}{\ldots}
\newcommand{\e}{\mr{e}}
\newcommand{\PR}[1]{\Pr\SET{#1}}       	% Probability
\newcommand{\pdf}{f}            			% PDF. I vote for having the PDFs and the PMFs as italic because the pmf as \tr{P}_\bX(\bX) looks weird and also because I like to see the pmfs and the pdfs as functions, which we denote using italic letters (like the Q-function, f(t), s(t), h(t), etc.)
\newcommand{\IND}[1]{\ms{I}\big[{#1}\big]}   	% Indicator function
\newcommand{\Ex}{\ms{E}}     			% Expectation (AA).
\newcommand{\mcI}{\mc{I}}
\newcommand{\matD}{\tnr{\textbf{D}}}
\newcommand{\matW}{\tnr{\textbf{W}}}
\pgfplotsset{compat=1.12}
\tikzstyle{rect_my} = [draw, rectangle, minimum width=2cm, text width=1.8cm, fill=gray!15, 
\tikzstyle{square_my} = [draw, rectangle, minimum width=1cm, text width=0.8cm, fill=gray!15, 
\tikzstyle{square_my_graph} = [draw, rectangle, minimum width=1.2cm, text width=1cm, fill=gray!15, 
\tikzstyle{circle_my} = [draw, circle, minimum width=1cm, text width=0.8cm, fill=gray!15, 
\tikzstyle{circle_my_graph} = [draw, circle, minimum width=1.1cm, text width=.8cm, fill=gray!15, 
\tikzstyle{cloud_my} = [draw, shape=cloud, minimum width=1cm, text width=0.8cm, fill=gray!15, 
\tikzstyle{point_my} = [draw=none, minimum width=0cm, text width=0cm, fill=none, 
\tikzstyle{line_my} = [draw, -latex]    
\tikzstyle{box_my}=[draw, minimum size=2em, text width=4.5em, text centered]
\tikzstyle{bigbox_my}=[draw, inner sep=15pt]
\tikzstyle{arrow_my} = [thick,->,>=stealth]
\tikzstyle{noarrow_my} = [thick,-,=>stealth]
\newacronym[\glsshortpluralkey=PDFs,\glslongpluralkey=probability density functions]{pdf}{PDF}{probability density function}
\newacronym[\glsshortpluralkey=CDFs,\glslongpluralkey=cumulative density functions]{cdf}{CDF}{cumulative density function}
\newacronym[\glsshortpluralkey=CCDFs,\glslongpluralkey=complementary cumulative density functions]{ccdf}{CDF}{complementary cumulative density function}
\newacronym[\glsshortpluralkey=PMFs,\glslongpluralkey=probability mass functions]{pmf}{PMF}{probability mass function}
\newacronym[]{lhs}{l.h.s.}{left-hand side}
\newacronym[]{rhs}{r.h.s.}{right-hand side} 
\newacronym[]{bicm}{BICM}{bit-interleaved coded modulation}
\newacronym[]{bicmid}{BICM-ID}{BICM with iterative demapping}
\newacronym[]{cm}{CM}{coded modulation}
\newacronym[]{tcm}{TCM}{trellis-coded modulation}
\newacronym[]{mlc}{MLC}{multi-level coding}
\newacronym[]{pam}{PAM}{pulse amplitude modulation}
\newacronym[]{bpsk}{BPSK}{binary phase shift keying}
\newacronym[]{qam}{QAM}{quadrature amplitude modulation}
\newacronym[]{16qam}{16-QAM}{16-points quadrature amplitude modulation}
\newacronym[]{psk}{PSK}{phase shift keying}
\newacronym[\glsshortpluralkey=LLRs,\glslongpluralkey=logarithmic likelihood ratios]{llr}{LLR}{logarithmic likelihood ratio}
\newacronym[]{oc}{OC}{operating characteristic}
\newacronym[]{dmp}{DMP}{discretized message passing}
\newacronym[]{mp}{MP}{message passing}
\newacronym[]{ep}{EP}{expectation propagation}
\newacronym[\glsshortpluralkey=MIs,\glslongpluralkey=mutual informations]{mi}{MI}{mutual information}
\newacronym[\glsshortpluralkey=GMIs,\glslongpluralkey=generalized mutual informations]{gmi}{GMI}{generalized mutual information}
\newacronym[]{eesm}{EESM}{exponential effective-SNR-mapping}
\newacronym[]{bicm-gmi}{BICM-GMI}{BICM generalized mutual information}
\newacronym[]{awgn}{AWGN}{additive white Gaussian noise}
\newacronym[]{bsc}{BSC}{binary symetric channel}
\newacronym[]{amc}{AMC}{adaptive modulation and coding}
\newacronym[]{csi}{CSI}{channel state information}
\newacronym[]{cqi}{CQI}{channel quality indicator}
\newacronym[]{kl}{KL}{Kullback-Leibler}
\newacronym[]{cmm}{CMM}{circular moment matching}
\newacronym[]{ga}{GA}{Gaussian approximation}
\newacronym[]{sp}{SP}{set-partitioning}
\newacronym[]{gsm}{GSM}{global system for mobile communications}
\newacronym[]{edge}{EDGE}{enhanced data rates for GSM evolution}
\newacronym[]{3gpp}{3GPP}{3rd generation partnership project}
\newacronym[]{umts}{UMTS}{Universal Mobile Telecommunication System}
\newacronym[]{lte}{LTE}{Long Term Evolution}
\newacronym[]{dvb}{DVB}{digital video broadcasting}
\newacronym[]{fdd}{FDD}{Frequency Division Duplexing}
\newacronym[\glsshortpluralkey=CCs,\glslongpluralkey=convolutional codes]{cc}{CC}{convolutional code}
\newacronym[\glsshortpluralkey=PCCCs,\glslongpluralkey=parallel concatenated convolutional codes]{pccc}{PCCC}{parallel concatenated convolutional code}
\newacronym[\glsshortpluralkey=TCs,\glslongpluralkey=turbo codes]{tc}{TC}{turbo code}
\newacronym{ldpc}{LDPC}{low-density parity-check}
\newacronym[]{ofdm}{OFDM}{orthogonal frequency-division multiplexing}
\newacronym[]{bep}{BEP}{bit-error probability}
\newacronym[]{wep}{WEP}{word-error probability}
\newacronym[]{sep}{SEP}{symbol-error probability}
\newacronym[]{pep}{PEP}{pairwise-error probability}
\newacronym[]{ttcm}{TTCM}{turbo-trellis coded modulation}
\newacronym[]{uep}{UEP}{unequal error protection}
\newacronym[\glsshortpluralkey=CENCs,\glslongpluralkey=convolutional encoders]{cenc}{CENC}{convolutional encoder}
\newacronym[]{mimo}{MIMO}{multiple-input multiple-output}
\newacronym[\glsshortpluralkey=SNRs,\glslongpluralkey=signal-to-noise ratios]{snr}{SNR}{signal-to-noise ratio}
\newacronym[\glsshortpluralkey=SINRs,\glslongpluralkey=the signal-to-interference-plus-noise ratios]{sinr}{SINR}{the signal-to-interference-plus-noise ratio}
\newacronym[]{msb}{MSB}{most-significative bit}
\newacronym[]{bcjr}{BCJR}{Bahl--Cocke--Jelinek--Raviv}
\newacronym[]{cbc}{CBC}{Colavolpe--Barbieri--Caire}
\newacronym[]{skr}{SKR}{Shayovitz--Kreimer--Raphaeli}
\newacronym[\glsshortpluralkey=SEDs,\glslongpluralkey=squared Euclidean distances]{sed}{SED}{squared Euclidean distance}
\newacronym[\glsshortpluralkey=EDs,\glslongpluralkey=Euclidean distances]{ed}{ED}{Euclidean distance}
\newacronym[\glsshortpluralkey=MEDs,\glslongpluralkey=minimum Euclidean distances]{med}{MED}{minimum Euclidean distance}
\newacronym[]{core}{CoRe}{constellation rearrangement}
\newacronym[]{pdl}{PDL}{parallel decoding of the individual levels}
\newacronym[\glsshortpluralkey=GCs,\glslongpluralkey=Gray codes]{gc}{GC}{Gray code}
\newacronym[]{brgc}{BRGC}{binary-reflected Gray code}
\newacronym[]{nbc}{NBC}{natural binary code}
\newacronym[]{fbc}{FBC}{folded-binary code}
\newacronym[]{bsgc}{BSGC}{binary semi-Gray code}
\newacronym[]{msp}{MSP}{modified set-partitioning}
\newacronym[]{ssp}{SSP}{semi set-partitioning}
\newacronym[]{fhd}{FHD}{free Hamming distance}
\newacronym[]{mfhd}{MFHD}{maximum free Hamming distance}
\newacronym[]{ods}{ODS}{optimal distance spectrum}
\newacronym[]{iud}{i.u.d.}{independent and uniformly distributed}
\newacronym[]{ud}{u.d.}{uniformly distributed}
\newacronym[]{iid}{i.i.d.}{independent, identically distributed}
\newacronym[]{ami}{AMI}{accumulated mutual information}
\newacronym[]{bico}{BICO}{binary-input continuous-output}
\newacronym[]{gh}{GH}{Gauss--Hermite}
\newacronym[]{gum}{GUM}{Gaussian--uniform mixture}
\newacronym[\glsshortpluralkey=BSs,\glslongpluralkey=base-stations]{bs}{BS}{base-station}
\newacronym[\glsshortpluralkey=MSs,\glslongpluralkey=mobile-stations]{ms}{MS}{mobile-stations}
\newacronym[]{phy}{PHY}{physical layer} 
\newacronym[]{rlc}{RLC}{Radio-Link control} 
\newacronym[]{ran}{RAN}{Radio Access Network} 
\newacronym[]{llc}{LLC}{logical link control} 
\newacronym[]{tcp}{TCP}{transmission control protocol} 
\newacronym[]{mac}{MAC}{media access control} 
\newacronym[]{fft}{FFT}{fast Fourier transform} 
\newacronym[]{ft}{FT}{Fourrier transform}
\newacronym[]{cf}{CF}{characteristic function} 
\newacronym[]{mgf}{MGF}{moment generating function} 
\newacronym[]{ee}{EE}{energy efficiency} 
\newacronym[]{eb}{EB}{energy per bit}
\newacronym[]{kkt}{KKT}{Karush--Kuhn--Tucker} 
\newacronym[]{mcs}{MCS}{modulation/coding scheme} 
\newacronym[]{fec}{FEC}{forward error correction}
\newacronym[]{arq}{ARQ}{automatic repeat request}
\newacronym[]{harq}{HARQ}{hybrid ARQ}
\newacronym[]{tarq}{TARQ}{truncated HARQ}
\newacronym[]{ir}{IR}{incremental redundancy}
\newacronym[]{rpr}{RR}{repetition redundancy}
\newacronym[]{rrharq}{RR-HARQ}{repetition redundancy HARQ}
\newacronym[]{irharq}{IR-HARQ}{incremental redundancy HARQ}
\newacronym[]{ack}{ACK}{positive acknowledgment}
\newacronym[]{nack}{NACK}{negative acknowledgment}
\newacronym[]{hol}{HoL}{head of the line}
\newacronym[]{crc}{CRC}{cyclic redundancy check}
\newacronym[]{dp}{DP}{dynamic programming}
\newacronym[]{gp}{GP}{geometric programming}
\newacronym[]{per}{PER}{packet error rate}
\newacronym[]{ber}{BER}{bit error rate}
\newacronym[]{op}{OP}{outage probability}
\newacronym[]{spa}{SPA}{saddle-point approximation}
\newacronym[]{mrc}{MRC}{maximum ratio combining}
\newacronym[]{mdp}{MDP}{Markov decision process}
\newacronym[]{lp}{LP}{linear programming}
\newacronym[]{pomdp}{POMDP}{partially observable Markov decision process}
\newacronym[]{psimdp}{PSI-MDP}{partial state information Markov decision process}
\newacronym[]{scpp}{SCPP}{stochastic shortest path problem}
\newacronym[]{forw}{frwd}{forward}
\newacronym[]{feed}{fdbk}{feedback}
\newacronym[]{mm}{MM-HARQ}{multi-message HARQ}
\newacronym[]{xp}{XP-HARQ}{cross-packet HARQ}
\newacronym[]{ts}{TS}{time-sharing}
\newacronym[]{sc}{SC}{superposition coding}
\newacronym[]{sbrq}{SBRQ}{systematic backward retransmission}
\newacronym[]{brq}{BRQ}{backward retransmission}
\newacronym[]{lharq}{L-HARQ}{layer-coded HARQ}
\newacronym[]{anlharq}{AoN-HARQ}{all-or-none L-HARQ}
\newacronym[]{vlharq}{VL-HARQ}{variable-length HARQ}
\newacronym[]{pp}{PPP}{point process}
\newacronym[]{ppp}{PPP}{Poisson point process}
\newacronym[]{fide}{FIDE}{F\'ed\'eration Internationale des \'Echecs}
\newacronym[]{fifa}{FIFA}{F\'ed\'eration Internationale de Football Association}
\newacronym[]{fivb}{FIVB}{F\'ed\'eration Internationale de Volleyball}
\newacronym[]{epl}{EPL}{English Premier League}
\newacronym[]{nhl}{NHL}{National Hockey League}
\newacronym[]{nfl}{NFL}{National Football League}
\newacronym[]{ipl}{IPL}{Indian Premier League}
\newacronym[]{sg}{SG}{stochastic gradient}
\newacronym[]{lms}{LMS}{least mean squares}
\newacronym[]{rls}{RLS}{recursive least squares}
\newacronym[]{vss}{VSS}{variable step-size}
\newacronym[]{hfa}{HFA}{home-field advantage}
\newacronym[]{ha}{HA}{home advantage}
\newacronym[]{mov}{MOV}{margin of victory}
\newacronym[]{ac}{AC}{Adjacent Categories}
\newacronym[]{cl}{CL}{Cumulative Link}
\newacronym[]{rps}{RPS}{Ranked Probability Score}
\newacronym[]{mse}{MSE}{Mean Squared Error}
\newacronym[]{mmse}{MMSE}{Minimum Mean Squared Error}
\newacronym[]{rmse}{RMSE}{Root Mean Squared Error}
\newacronym[]{map}{MAP}{maximum a posteriori}
\newacronym[]{ml}{ML}{maximum likelihood}
\newacronym[]{loo}{LOO}{leave-one-out}
\newacronym[]{alo}{ALO}{approximate leave-one-out}
\newacronym[]{logo}{LOGO}{leave-one-game-out}
\newacronym[]{alogo}{ALOGO}{approximate leave-one-game-out}
\newacronym[]{msd}{MSD}{mean-square deviation}
\newacronym[]{lop}{LOP}{linear ordering problem}
\newacronym[]{svd}{SVD}{singular values decomposition}
\newacronym[]{skf}{SKF}{Simplified Kalman Filter}
\newacronym[]{vskf}{vSKF}{\emph{vector-covariance} Simplified Kalman Filter}
\newacronym[]{sskf}{sSKF}{\emph{scalar-covariance} Simplified Kalman Filter}
\newacronym[]{fskf}{fSKF}{\emph{fixed-variance} Simplified Kalman Filter}
\newacronym[]{kf}{KF}{Kalman Filter}
\newacronym[]{gelo}{G-Elo}{Generalized Elo}
\newacronym[]{tpb}{TPB}{tensor-product-basis}
\newtheorem{lemma}{Lemma}
\newtheorem{example}{Example}
\begin{document}

%------------------
%-- Paper title
\title{Why winning a soccer game is worth 5 points 
%What do we not understand in sport ranking?
}
\author{Leszek Szczecinski
\thanks{%
L.~Szczecinski  is with Institut National de la Recherche Scientifique, Montreal, Canada. [e-mail: Leszek.Szczecinski@inrs.ca].}%
}%

%-----------------
\ifdefined\JSA
\maketitle    %% does not show the title and the authors' names when using the style: dgjournal.sty
\fi
\ifdefined\JQAS
\originalmaketitle  %% this shows the title and the authors' names when using the style: dgjournal.sty
\fi
%\thispagestyle{empty}

%%%%\doublespacing
%\setstretch{1.6} %% original
\setstretch{1.6} %% LS

%%%%
\begin{abstract} 
Ranking is used in sport leagues to determine a champion and/or to decide on promotion/relegation of teams. Arguably, the best known ranking method relies on scores obtained by cumulating the points associated with the wins and the draws of all teams, which are then ranked by sorting the score obtained. There are two main problems with this ranking method. First, the meaning of the ranking is undefined, and, second, it depends on the relative value of the wins that is arbitrarily set. We remedy these issues by introducing a probabilistic model of the game results and by showing an interpretation of the ranking that is consistent with the model. We also propose a methodology to estimate the parameter of the model which allows us to objectively determine the value of the win. In particular, using data from the association football (soccer), we show that the value of the win is close to five (5) points.
\end{abstract}

%%%%
\section{Introduction}\label{Sec:Intro}

The ranking of teams in sports leagues is a key element of sports competition. It allows us to determine the champion or to promote and relegate teams. In general, we expect it to map the observed results of the games into a single parameter that is somehow related to the ``strength'' of the team.

The problem of ranking has been tackled in the literature mainly from the perspective of the so-called power ranking, where teams are characterized by continuous-valued strengths which must be inferred from the results of the games, \eg in \cite{Maher82}, \cite{Fahrmeir92}, \cite{Karlis08}, \cite{Egidi18}, \cite{Ley19}, \cite{Lasek20}, \cite{Szczecinski22}. By sorting the estimated strengths, we obtain the ranking. This is the principle underlying the current ranking used by \gls{fifa} and \gls{fivb}.

Power ranking is usually implemented using adaptive algorithms, where strengths are updated after each game played, and many games must be observed before the strengths converge close to the optimal solution. For this reason, it is useful in competitions that do not have a time-limited calendar, \eg in international meetings where games are played continuously throughout the years. This is why the power ranking could be implemented by \gls{fifa} and \gls{fivb} to rank the national teams.

On the other hand, sports leagues have relatively short seasons in terms of the number of games played per team. The convergence is thus difficult to guarantee, and since the seasons are treated independently, the results of one cannot be used to initialize the other. To avoid these problems, the ranking method used in sports leagues consists in calculating a score for each team by cumulating the points associated with their results (such as wins and draws). The assignment of points to the game results is called a \emph{scoring rule}. The teams are then ranked by sorting the score obtained. 

We focus here on the very popular case of games in which teams can win, lose, or draw the game, typical in association football (soccer) games, and we address two main issues of the ranking method, which relies on cumulative score.

First, the ranking results are offered without a clear-cut interpretation;  this problem is not caused by the ranking method as such, but rather by the lack of explicit model underlying the method. Therefore, we propose a model which endows the ranking with a simple interpretation. A corollary result is that while the existing rankings may be suitable to identify the best teams (in the sense we will show), they do not identify the worst ones well. We propose a new ranking targeting the worst teams.

Second, the ranking (result) depends on the relative value of the wins, whose value seems entirely subjective and may vary between sports. For example, in today's association football games, the scoring rule (3-1-0) is used, that is, three points are given for each win, one point for a draw, and no points for the loss. Before 1981, the rule (2-1-0) was used and the official reason for increasing the win-value was to motivate the teams to win the game. Whether we agree or not with this rationale, the win value appears to be entirely subjective. The main goal of this work is to show that it can be objectively determined from the data.

%The analysis of the point-counting rules in the sport ranking literature is rather scare. For example, statistical analysis of the binary games is shown in \cite{Ben-Naim07}. However, the subject is known in the area of political science \citep{Saari96}.

This work is organized as follows. We define the problem and the relevant notation in \secref{Sec:Model} while the model is introduced in \secref{Sec:simplified.ranking} where we also show that summing the points of wins and draws is, in fact, a ranking method consistent with the model. In \secref{Sec:Inference} we show how to estimate the model parameters from the observations and which scoring rules match the data. We give examples related to practical cases in association football comparing the results in the seasons played under the different scoring rules. Conclusions and recommendations are presented in \secref{Sec:Conclusions}.

%%%%%%
\section{Problem definition}\label{Sec:Model}
Consider $M$ teams indexed with integers from the set $\mcI=\set{1,\ld, M}$, endowed with an intrinsic order $\rho_1\prec\rho_2\prec\ld\prec\rho_{M}$, $\rho_m \in \mcI$, from which we understand that team $\rho_1$ is ordered before team $\rho_2$, the latter is ordered before $\rho_3$, etc. In the context of sport ranking, we interpret $m \prec n$ as ``$m$ is better than $n$''; consequently, $\rho_1$ is the best team, $\rho_2$ is the second best, and $\rho_M$ is the worst. But this is merely semantics: in social science, we interpret $m \prec n$ as ``$m$ is preferred to $n$'' \citep{Slater61}, \citep{Maas95}.

The order defined by a vector $\brho=[\rho_1,\ld, \rho_M]\in\Pi_{\mcI}$, where $\Pi_{\mcI}$ is a set of all permutations of $\mcI$, cannot be directly observed: we may only infer the order from some observable results affected by $\brho$. The result of the inference $\hat{\brho}\in\Pi_{\mcI}$ is called a ranking. 

The common problem is to find the ranking $\hat\brho$ through pairwise comparisons, which appear naturally in sports competitions \citep{Flueck75}, where teams are selected and compared against each other through one-on-one games; this is repeated for different pairs and, since the comparison results are not reliable, the same pairs are compared many times, which amounts to playing more games.

Here, we consider a very common situation, where the comparison between teams $m$ and $n$  yields three possible outcomes: $m$ wins against $n$, or $n$ wins against $m$, or else $m$ and $n$ draw. We gather these results in the $M \times M$ matrices $\matW$ and $\matD$, whose entries $w_{m,n}$ and $d_{m,n}$, are natural numbers that indicate, respectively, how many times $m$ won against $n$ and how many times $m$ drew playing against $n$. 

For convenience, we set $w_{m,m}=0$ and $d_{m,m}=0$; $T_{\tr{D}}=\sum_{m<n} d_{m,n}$ is used to denote the total number of draws, $w_m=\sum_{n=1}^M w_{m,n}$ -- the number of games in which $m$ won, $l_m=\sum_{n=1}^M w_{n,m}$ -- the number of games in which $m$ lost, and $d_m=\sum_{n=1}^M d_{m,n}$ -- the number of games in which $m$ drew. We assume that the number of games played by each team is equal to $T_0=w_m+l_m+d_m$ and the total number of games is equal to $T$.

Finding the order from the results stored in $\matW$ and $\matD$ can be done in many different ways. Arguably, the best known method is based on the calculation of a \emph{score} 
\begin{align}
\label{score.definition}
    s_m=w_m +\xi d_m,
\end{align}
where $\xi<1$ is the nominal ``value'' of the draw.

By sorting the score $s_m$ in descending order, the first element identifies the ``best'' team and this rule is so common that it often is confounded with a definition of what the best team means. However, such a definition would be circular: since the highest score determines which team is the best, we cannot define ``being the best'' through the score. Another conceptual difficulty in interpreting the ranking is due to the fact that the score depends on $\xi$ which is subjectively defined

The score $s_m$ is fractional and, for presentation purposes, may be more convenient to use
\begin{align}
    s'_m= \frac{1}{\xi}s_m = \kappa w_m + d_m,
\end{align}
where $\kappa=1/\xi$ is the nominal value of the win compared to the draw when the latter is worth one (point). From the ranking perspective, the scoring rules (1,$\xi$,0) and $(\kappa,1,0)$ are equivalent, but the latter is more common and $\kappa$ is often defined by integer values. Of course, there is no particular reason to believe $\kappa$ should be integer; it is just simpler to count. As we will see in \secref{Sec:Inference}, the $\xi$-based scoring rule is preferable from the point of view of estimation procedures.

%%%%%%%
\section{Model and practical rankings}\label{Sec:simplified.ranking}

To eliminate the interpretation difficulties of the ranking methods and to verify if the nominal values of $\xi$ (or $\kappa$) are related to the empirical results, we must first define a mathematical model that relates the observed results to the order between the teams. 

We use a probabilistic framework where the probability of the game result is conditioned on the underlying order. Namely,
\begin{align}
\label{Pr.alpha}
    \log \PR{m ~\tr{wins against}~ n| m\prec n} &= \alpha,\\
\label{Pr.beta}
    \log \PR{m ~\tr{loses against}~ n| m\prec n} &= \beta,\\
\label{Pr.gamma}
    \log \PR{m ~\tr{draws against}~ n} &=\gamma,
\end{align}
where the logarithms simplify manipulations on probabilities. 

We expect $\alpha>\gamma>\beta$ because we treat the game results as ordinal variables, where the probability of a better team losing should be smaller than the probability of drawing, and the latter -- smaller than the probability of winning. 

Thus, we have a simple probabilistic interpretation of the order $\brho$: the better team has more chances to win than to lose or to draw. More precisely, it wins with probability $\e^{\alpha}$. From the law of total probability, we must have $\e^{\alpha}+\e^{\gamma}+\e^{\beta}=1$.

This model has a long history. It extends the one proposed in \cite{Slater61} by including the draws, and is very close to the model which can be found in \citet{Singh68}, \citet{Tiwisina19}. The main difference from the latter is that we use the parameters $\alpha$, $\beta$, and $\gamma$ which do not depend on the teams $(m, n)$ being compared. This is a pragmatic simplification from which the current ranking methods can be derived.

With the model defined, we are ready to find the ranking, \ie the estimate of the order.

\textbf{Joint ranking}

The common formulation in the literature is to solve the \emph{joint} \gls{ml} estimation problem \citep{deCani69}, where all positions are identified simultaneously: 
\begin{align}
\label{joint.ML}
    \hat\brho&=\argmax_{\brho\in\Pi_{\mcI}} J(\brho)\\
\label{J.brho}
    J(\brho)&=\log \PR{\matW,\matD| \brho}=\log \PR{\matW| \brho}+\log \PR{\matD| \brho}\\
    \log \PR{\matW| \brho}
    &=\alpha C(\brho) + \beta (T-T_{\tr{D}}-C(\brho))\\
    \log \PR{\matD| \brho}
    &=\gamma T_{\tr{D}}
\end{align}
where $\Pi_{\mcI}$ is the space of all possible permutations of $\mcI$ and
\begin{align}
    C(\brho)&=\sum_{m=1}^M\sum_{n=m+1}^{M} w_{\rho_m,\rho_n}
\end{align}
is the consistency of the order $\brho$ which tells us how many times in the season the better team (as indicated by $\brho$) won the game. Of course, $C(\brho)$ depends on $\matW$.

Since we assume $\alpha>\beta$, the optimization in \eqref{joint.ML} is equivalent to the maximization of $C(\brho)$ resulting in a well-known \gls{lop}, for which many solutions have been proposed; see \eg \cite{Schiavinotto04}, \cite{Charon10}.

However, the \gls{lop} formulation is not quite suitable for ranking in sports. First, finding $\hat\brho$ in \eqref{joint.ML} requires non-trivial numerical procedures \citep{Schiavinotto04}, which lacks transparency. Second, there may be many equivalent solutions $\hat\brho$. In fact, \citet{Szczecinski22c} shows that there may be hundreds or even thousands of equivalent solutions to the ranking problem defined by \eqref{joint.ML} even for a relatively small value of $M\approx 13$.

Clearly, a ranking that produces opaque and/or ambiguous results is of little use in the context of sport competitions. The Bowl Championship Series (BCS) in college-level American football provide a cautious story, where the ranking which was based on non-transparent principles used to generate a lot of controversy, see \citet[Ch.2~]{Langeville12_book} for more context. To avoid such issues, practical (sport) rankings tend to have simple and explicitly defined rules.

Therefore, sorting the score $s_m$ is a pragmatic ranking solution. In fact, for win/loss games (no draws), it is a well-known but suboptimal solution of the \gls{lop} \citep{Chenery58}, \citep{Slater61}, \citep{Marti12}; for the win/draw/loss case it is known as the Copland method \citep{Saari96} where $\xi=\frac{1}{2}$ is used. This ranking method is simple and transparent; yet, since we do not know how it is derived, we cannot interpret it in unambiguous terms. More importantly, the ranking method is not explicitly related to any model, and without the latter, we do not know how to find $\xi$ objectively.

\subsection{Ranking: Probabilistic Interpretation}

We first want to dispel the confusion of ranking interpretation by giving it a clear probabilistic meaning under the model we introduced. The approach used by \citet{Ben-Naim07} was similar in spirit by interpreting $s_m$ as a monotonic function of the estimated probability of winning (against a randomly chosen opponent). Such a formulation was possible in the case of binary (win/loss) games. We take a more general approach applicable in the case of win/draw/loss, by addressing directly the issue of ordering. This is done using the following criterion.

\begin{lemma}\label{Lemma:best}
The logarithmic likelihood of the team $m$ being the best (\ie $\rho_1=m$), can be written as
\begin{align}
    \log\PR{\matW, \matD|\rho_1=m}=(\alpha-\beta)s_m + \tr{Const.},
\end{align}
where $s_m=w_m + \xi d_m$ is the score and the draw value is given by
\begin{align}
\label{xi}
    \xi &= \frac{\gamma-\beta}{\alpha-\beta}.
\end{align}
\end{lemma}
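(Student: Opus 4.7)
The plan is to compute the log-likelihood by partitioning the observed games into two groups: those in which team $m$ participates and those in which it does not. Conditioning on $\rho_1 = m$ fixes the relative order $m \prec n$ for every opponent $n$, which pins down the probability of each of $m$'s games directly via \eqref{Pr.alpha}--\eqref{Pr.gamma}. The contribution from games not involving $m$ is insensitive to the identity of the best team once one marginalises over the sub-ordering of the remaining $M-1$ teams, so it can be folded into the additive constant.

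First, summing the log-probabilities over the $w_m$ wins, $l_m$ losses, and $d_m$ draws of team $m$ yields $\alpha w_m + \beta l_m + \gamma d_m$. Then I would invoke the schedule identity $w_m + l_m + d_m = T_0$ stated in \secref{Sec:Model} to substitute $l_m = T_0 - w_m - d_m$. Grouping terms gives $(\alpha - \beta) w_m + (\gamma - \beta) d_m + \beta T_0$, which is the crucial manipulation: eliminating $l_m$ is what causes the draw count $d_m$ to appear with the coefficient $\gamma - \beta$ rather than $\gamma$.

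Next, factoring out $\alpha - \beta > 0$ (positive by the ordinal assumption $\alpha > \gamma > \beta$) produces $(\alpha - \beta)[w_m + \xi d_m] + \beta T_0$ with $\xi$ exactly as in \eqref{xi}. Recognising $w_m + \xi d_m$ as the score $s_m$ from \eqref{score.definition} yields the claimed identity, with $\beta T_0$ and the contribution from games not involving $m$ absorbed into ``Const.''.

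The step I expect to require the most care is the handling of the games not involving $m$. Strictly, the particular games excluded depend on which team we condition on, so the marginal over sub-orderings is not literally independent of $m$. The natural reading---consistent with how the lemma is applied later to compare candidates for the top rank on a single fixed dataset---is that ``Const.'' collects all terms that do not involve $s_m$, so that the ordering induced by the likelihood is governed entirely by the $(\alpha - \beta) s_m$ term.
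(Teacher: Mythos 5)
Your proof is correct and follows essentially the same route as the paper's: the paper likewise writes the log-likelihood of team $m$'s games as $\alpha w_m + \beta l_m + \gamma d_m$, substitutes $l_m = T_0 - w_m - d_m$, and factors out $\alpha-\beta$ to arrive at $(\alpha-\beta)s_m + \beta T_0$ with $\xi=(\gamma-\beta)/(\alpha-\beta)$. The one difference is that the paper's proof silently restricts attention to the games involving $m$ (its decomposition of $\log\PR{\matW,\matD|\rho_1=m}$ contains only those terms), whereas you explicitly flag the contribution of the remaining games and its absorption into the constant --- a point on which your write-up is, if anything, more careful than the original.
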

\begin{proof}
From the independence of the results counted in $\matW$ and $\matD$ we obtain the following:
\begin{align}
\nonumber
    \log\PR{\matW, \matD|\rho_1=m}
    &=
    \sum_{n} w_{m,n}\log \PR{m\tr{~wins against~}n|m\prec n}\\
\nonumber
    &+
    \sum_{n} w_{n,m}\log \PR{m\tr{~loses against~}n|m\prec n}\\
\label{Pr.rho1}
    &+
    \sum_{n} d_{m,n}\log \PR{m\tr{~draws against~}n|m\prec n}.
\end{align}
Since we assume that the team $m$ is the best, the condition $m\prec n$ is satisfied for all $n$. Therefore, using \eqref{Pr.alpha}-\eqref{Pr.gamma}, we may write \eqref{Pr.rho1} as
\begin{align}
\nonumber
    \log\PR{\matW, \matD|\rho_1=m}
    &=
    \alpha w_{m} + \beta (T_0-w_m-d_m) +\gamma d_{m}\\
\label{Pr.rho1.sm}
    &=(\alpha-\beta)s_m + \beta T_0,
\end{align}
where factoring out $(\alpha-\beta)$ is possible because $\alpha>\beta$ (a reasonable assumption confirmed by the estimation results).
\end{proof}

With the number of games $T_0$ being the same for all teams, the sorting of $s_m$ ranks the teams according to their likelihood of being the best. This is the interpretation we were looking for: the team with the highest score $s_m$ is not the ``best'' but rather has the highest likelihood of being the best. This simply emphasizes the fact that the order is not directly observable and can only be inferred from the competition results. 

This finding gives us some confidence that the model \eqref{Pr.alpha}-\eqref{Pr.gamma} may be seen as a mathematical foundation for current sport rankings. 

Another interesting implication is that the team with the lowest score $s_m$ is not the worst, but it is the least likely to be the best. Such an interpretation seems awkward, especially when being the worst team is meant to have real consequences (such as suffering relegation to the lower-tier league). Then, finding the last team by sorting $s_m$ misses the point, and a more natural question to ask is rather: which team is the most likely to be the worst? Using the same approach as in the proof of \lemref{Lemma:best} above, we obtain the following.

\begin{lemma}\label{Lemma:worst}
The log-likelihood of the team $m$ being the worst (\ie $\rho_M=m$), can be written as
\begin{align}
\label{Pr.rhoM}
    \log\PR{\matW, \matD|\rho_M=m}=(\alpha-\beta)\ov{s}_m + \tnr{Const.},
\end{align}
where 
\begin{align}
\label{ov.sm}
    \ov{s}_m & = l_m + \xi d_m
\end{align}
\end{lemma}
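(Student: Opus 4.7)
The plan is to mirror the argument used in the proof of \lemref{Lemma:best}, but with team $m$ placed at the bottom of the order rather than at the top. First, I would decompose $\log\PR{\matW,\matD|\rho_M=m}$ as a sum over the games involving team $m$, split into wins, losses, and draws exactly as in \eqref{Pr.rho1}. The independence of the results counted in $\matW$ and $\matD$ makes this decomposition immediate.

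Next, I would exploit the fact that conditioning on $\rho_M=m$ means $n\prec m$ for every other team $n$, so the conditional probabilities in the model \eqref{Pr.alpha}--\eqref{Pr.gamma} apply with the roles of ``win'' and ``loss'' exchanged relative to the calculation for \lemref{Lemma:best}. Concretely, ``$m$ wins against $n$'' now corresponds to the better team ($n$) losing and contributes $\beta$; ``$m$ loses against $n$'' corresponds to the better team winning and contributes $\alpha$; a draw still contributes $\gamma$. Summing over all opponents yields
\begin{align}
    \log\PR{\matW,\matD|\rho_M=m} = \beta w_m + \alpha l_m + \gamma d_m.
\end{align}

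The final step is algebraic repackaging. Using $w_m = T_0 - l_m - d_m$, the expression rewrites as $(\alpha-\beta)l_m + (\gamma-\beta)d_m + \beta T_0$; factoring out $(\alpha-\beta)>0$ and applying the definition of $\xi$ from \eqref{xi} then gives $(\alpha-\beta)\ov{s}_m + \beta T_0$. Since $T_0$ is common to all teams, the $\beta T_0$ term is absorbed into the additive constant, producing exactly the form in \eqref{Pr.rhoM}.

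I do not anticipate a real obstacle: the ordinal model is effectively symmetric under order reversal, with the roles of $\alpha$ and $\beta$ exchanged while $\gamma$, and hence $\xi$, remain unchanged. The only point worth double-checking is that all $T_0$-dependent terms are genuinely team-independent, so that they can safely be collected into the constant that is irrelevant for the ranking by $\ov{s}_m$.
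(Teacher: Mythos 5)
Your proof is correct and follows exactly the route the paper intends: the paper proves \lemref{Lemma:worst} by the same argument as \lemref{Lemma:best} with the condition $\rho_M=m$ reversing the roles of $\alpha$ and $\beta$, yielding $\alpha l_m+\beta w_m+\gamma d_m=(\alpha-\beta)\ov{s}_m+\beta T_0$ just as you derived. Your closing observation that the $\beta T_0$ term is team-independent is also the same justification the paper relies on for absorbing it into the constant.
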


The form of \eqref{ov.sm} is identical to \eqref{score.definition} if we replace wins $w_m$ with losses $l_m$, and the largest value of $\ov{s}_m$ corresponds to most likely the worst team. Therefore, by sorting $\ov{s}_m$ in \emph{ascending} order, we obtain a new ranking, where the last team is most likely the worst one.

Note that the ranking based on $\ov{s}_m$ treats the draws as contributors to identify the worst teams, \ie larger $d_m$ increases the likelihood of being the worst. This may seem counter intuitive but only because we are used to looking at the ranking from a unique perspective of adding points to improve the ranking position. 

To dispel doubts and gain a better understanding of this issue, we can compare two teams with the same number of losses $l_m$. Quite naturally, between these teams, the one with the smaller number of wins $w_n$ is ``worse". But since we fixed the number of losses, this also means that having a larger $d_m$ identifies the worse team.

We now have two distinct rankings: the one based on $s_m$ determines which teams are most likely the best, and the one based on $\ov{s}_m$ allows us to determine which teams are most likely the worst. 

In general, $s_m$ and $\ov{s}_m$ produce different rankings, and making decisions using both may depend greatly on the context. For example, assume that we must split the league into two groups, \eg when identifying the composition of the playoffs after the regular season. This split into two groups will, in general, not be the same if we use $s_m$ or $\ov{s}_m$. If both groups are of comparable size (\eg in the case of half-half split), it is not obvious what to do. Should we use $s_m$ to identify the best teams, or $\ov{s}_m$ -- to identify the worst ones? We have no clear solution to that conundrum and, most likely, the convention approach of using $s_m$ is more acceptable.

On the other hand, if we want to identify small subgroups of the best and the worst teams, both $s_m$ and $\ov{s}_m$ can be used; for example, to determine which teams should be promoted to the higher-tier league, we can use $s_m$ and, to determine which teams should be relegated to the lower-tier league, we can use $\ov{s}_m$.

There is, however, one case where $s_m$ and $\ov{s}_m$ produce the same rankings. We can easily note from \eqref{ov.sm} and \eqref{ov.sm} that 
\begin{align}
\label{}
    -\ov{s}_{m,\xi} & = s_{m,1-\xi} - T_0,
\end{align}
where we make explicit the dependence of $s_m$ and $\ov{s}_m$ on $\xi$ (by subindexing). Since the constants (here $T_0$) are irrelevant to the sorting, we know that the sorting of $-\ov{s}_{m,\xi}$ and $s_{m,1-\xi}$ (in descending order) yields the same results. Therefore, ranking based on $s_{m,\xi}$ and on $\ov{s}_{m,\xi}$, yield identical results if $\xi=0.5$, that is, in the pre-1981 scoring rule (1-0.5-0) or, equivalently (2-1-0).

\begin{example}\label{Ex:sm_vs.ovsm}
To illustrate how $s_m$ and $\ov{s}_m$ affect the rankings, \tabref{tab:positions} provides the rankings in the 2002/03 seasons of \gls{epl} for a nominal value $\kappa=5$ (or $\xi=\frac{1}{5})$ using both $s_m$ and $\ov{s}_m$; where the reference is the current ranking based on $\kappa=3$ and $s_m$.

It should be emphasized that we evaluate the ranking changes for \emph{given} results. On the other hand, to verify whether changing the nominal value used in the ranking alters the performance of the teams, we must use data from seasons with different nominal values of $\xi$. We will do it in \exref{Ex:Inference}.

\begin{table}[]
    \centering
    \begin{tabular}{c|c||c|c||c|c}
    \multicolumn{2}{c||}{$\kappa=3$}
    &
    \multicolumn{4}{c}{$\kappa=5$}\\
    & $s'_m$ & & $s'_m$ & & $\ov{s}'_m$\\
    \hline
Man United [25-8-5] & 83 & Man United & 133 & Man United & 33 \\
Arsenal [23-9-6] & 78 & Arsenal & 124 & Arsenal & 39 \\
Newcastle [21-6-11] & 69 & Newcastle & 111 & Chelsea & 55 \\
Chelsea [19-10-9] & 67 & Chelsea & 105 & Liverpool & 60 \\
Liverpool [18-10-10] & 64 & Liverpool & 100 & Newcastle & 61 \\
Blackburn [16-12-10] & 60 & Everton & 93 & Blackburn & 62 \\
Everton [17-8-13] & 59 & Blackburn & 92 & Southampton & 73 \\
Southampton [13-13-12] & 52 & Man City & 81 & Everton & 73 \\
Man City [15-6-17] & 51 & Southampton & 78 & Bolton & 84 \\
Tottenham [14-8-16] & 50 & Tottenham & 78 & Middlesbrough & 85 \\
Middlesbrough [13-10-15] & 49 & Charlton & 77 & Tottenham & 88 \\
Charlton [14-7-17] & 49 & Middlesbrough & 75 & Fulham & 89 \\
Birmingham [13-9-16] & 48 & Leeds & 75 & Birmingham & 89 \\
Fulham [13-9-16] & 48 & Birmingham & 74 & Man City & 91 \\
Leeds [14-5-19] & 47 & Fulham & 74 & West Ham & 92 \\
Aston Villa [12-9-17] & 45 & Aston Villa & 69 & Charlton & 92 \\
Bolton [10-14-14] & 44 & Bolton & 64 & Aston Villa & 94 \\
West Ham [10-12-16] & 42 & West Ham & 62 & Leeds & 100 \\
West Brom [6-8-24] & 26 & West Brom & 38 & West Brom & 128 \\
Sunderland [4-7-27] & 19 & Sunderland & 27 & Sunderland & 142
    \end{tabular}
    \caption{Ranking in 2002/03 season of the \gls{epl}. The first column corresponds to the currently used ranking ($\kappa=3$), where the teams' records [win-draw-loss] are shown in brackets; the remaining rankings use $\kappa=5$. We rely on sorting of $s'_m$ (in descending order) and $\ov{s}'_m$ (in ascending order). Names of two teams are shown in bold, when the position change may be relevant: Everton advances to the sixth position (open a possibility of participation in subsequent tournament), and, using $\ov{s}_m$, Leeds is demoted to the third-bottom position which, in the current ranking, leads to relegation.}
    \label{tab:positions}
\end{table}

Sorting $s_m$, we aim to identify the best teams, and, using $\kappa=5$, there is a potentially important change, when Everton advances to the sixth position, which may lead to a post-season tournaments.\footnote{Only four top positions guarantee the participation in European tournaments; the fifth and the sixth ones help, if other, more involved conditions are satisfied.}

If we are interested in finding the worst teams, we use $\ov{s}_m$ and then, as we said, the losses are important, so Sunderland [4-7-27] and West Brom [6-8-24], whose records are dominated by losses, do not change their bottom positions. On the other hand, the third and fourth positions occupied, respectively, by West Ham [10-12-16] and Bolton [10-14-14], are altered because both teams, despite displaying more losses than wins, have an ambiguous record moderated by a very large number of draws. In contrast, Leeds' record [14-5-19] puts it in the third position with respect to the number of losses, and this fact demotes it from a comfortable bottom sixth position in the original ranking to the bottom third position (which would imply a relegation) in the ranking based on $\ov{s}_m$. In fact, it is easy to see that using $\ov{s}_m$, Leeds would be the third lowest, even for $\kappa=3$.

To go beyond a particular case of one season, we rank the \gls{epl} teams in 23 seasons 1995/96 -- 2018/19. Using the current ranking as a reference ($s_m$ and $\kappa=3$), we evaluate how many times in 23 seasons, varying $\kappa$ and/or using $\ov{s}_m$, the most relevant positions are altered. We consider the top position (being a champion), the top six positions (which may allow the teams to progress to another stage of the international competitions), and the three bottom positions (when the teams are relegated to the lower-tier league). We also consider changes in the three bottom positions assuming that the ranking is based on $\ov{s}_m$. The results are shown in \tabref{tab:epl.changes} 

\begin{table}[]
    \centering
     \begin{tabular}{c| c|c|c|c|c}
         &  \multicolumn{5}{c}{$\kappa$}\\
    ranking positions     & 2 & 3 & 4 & 5 & 6\\
         \hline
    champion ($s_m$) &
    1 & - & 0 & 0 & 0 \\
    top  6 ($s_m$)&
    2 & - & 3 & 5 & 6\\
    bottom 3 ($s_m$)&
    3 & - & 5 & 5 & 9 \\
    bottom 3 ($\ov{s}_m$) &
    3 & 10 & 10 & 12 & 14 
    \end{tabular}
    \caption{The seasons 1995/96 - 2018/2019 of the \gls{epl}: number of times the teams would be removed from their champion, top, or bottom positions when using values of $\kappa\neq 3$. The changes in bottom positions are also considered using the score $\ov{s}_m$ instead of $s_m$. Since $\kappa=3$ is used as a references for sorting of $s_m$, the respective values in the table are indicated with ``-".}
    \label{tab:epl.changes}
\end{table}

As we can see, only once the champion would be different due to the change from $\kappa=2$ (value before 1981) to $\kappa=3$. It happened in 2018, when Manchester City became champion with $w_m=32$ wins and $d_m=2$ draws, outranking Liverpool who had $w_m=30$ wins and $d_m=7$ draws; of course, the situation would reverse for $\kappa=2$. 

Note that using $\kappa>3$, the champion position is not challenged, but other relevant top positions are altered roughly once every four years. However, most of the changes occur in the bottom spots, and using $\ov{s}_m$ instead of $s_m$ significantly increases their number: changes would be observed every second season.

\end{example}

The results in \tabref{tab:positions} indicate that, for sufficiently large $\kappa$, when comparing $s'_m=\kappa w_m + d_m$, the number of draws $d_m$ is irrelevant as long as $w_m$ are different (because to compensate for one win, the team must have at least $\kappa$ more draws). In fact, we can see that, using $\kappa=5$, we essentially rank the teams according to their number of wins, $w_m$ (there are two exceptions where the teams with fewer wins are ranked before those with more: Southampton is ranked before Tottenham, Charlton and Leeds, and Middlesbrough is ranked before Leeds). Therefore, instead of $s_m$, we might use a two-criteria ranking: to identify the best teams, we may use the number of wins $w_n$ as the primary criterion, and, in the case where ties are observed (that is, the teams have the same values of $w_m$), we break them using the draws $d_m$ where a larger $d_m$ would correspond to a higher ranking. This concept is of course not new, and the current rankings are already based on multi-criteria evaluation, where the ties are broken by a goal-differential, a number of goals scored, etc.

Similarly, to identify the worst teams through $\ov{s}_m$, we can sort the number of losses $l_m$ (with the largest number of losses corresponding to the lowest ranking), and the ties would be broken again by draws $d_m$ (the larger $d_m$ would correspond to a lower ranking).

In fact, such a two-criteria ranking would eliminate the need to know $\kappa$. However, to do that, we should know if $\kappa$ \emph{is} large (or $\xi$ is small) and, given the model, we should be able to estimate $\kappa$ or $\xi$ from the data. This is what we do in the following.

%%%%%%%%
\section{Estimation of the parameters of the model}\label{Sec:Inference}

Inference is made noting that under the model \eqref{Pr.alpha}-\eqref{Pr.gamma} the observations $\matW$ and $\matD$ (capturing the results in a particular season) depend on $\alpha$, $\beta$, and $\gamma$ and we start by obtaining their estimates. Due to the relationship $\e^\alpha +\e^\beta +\e^\gamma =1$, we only need to estimate two arbitrarily chosen parameters (say, $\alpha$ and $\gamma$). We start by calculating their likelihood as:
\begin{align}
\label{likelihood.WD}
    \Pr(\matW,\matD|\alpha,\gamma)
    =\sum_{\brho \in \Pi_{\mcI}} \PR{\matW,\matD|\alpha,\gamma,\brho}\PR{\brho}
    \propto\sum_{\brho \in \Pi_{\mcI}} \e^{J(\brho,\alpha,\gamma)},
\end{align}
where the marginalization of the orders $\brho$ assumes their uniform distribution over $\Pi_{\mcI}$ (no prior knowledge of the order), and the dependence on $\alpha$ and $\gamma$ is made explicit in $J(\brho,\alpha, \gamma)$ which we have already defined in \eqref{J.brho}.

The explicit enumeration of all rankings $\brho$ in \eqref{likelihood.WD}, with complexity $O(M!)$, is not feasible even for a moderate value of $M$ (\eg $M\approx 20$, which is a common number of teams in sport leagues). Therefore, we adopt the strategy proposed in \citet{Szczecinski22c}: noting that $J(\brho,\alpha,\gamma)=\alpha C(\brho) +\beta (T-T_{\tr{D}}-c(\brho))+\gamma T_{\tr{D}}$ depends on $C(\brho)$, and that the latter is discrete, \ie $C(\brho)\in\set{0,\ld, T}$, we can rewrite \eqref{likelihood.WD} as follows:
\begin{align}
    \Pr(\matW,\matD|\alpha,\gamma)
    &\propto
    \sum_{t=0}^T c_t \e^{\alpha t + \beta (T-T_{\tr{D}} - t) + \gamma T_{\tr{D}}},
\end{align}
where $\set{c_t}_{t=0}^T$ is the Slater spectrum of the matrix $\matW$ \citep{Szczecinski22c} whose coefficients $c_t$ indicate the number of orders $\brho\in\Pi_{\mcI}$ that have the same $C(\brho)=t$. It turns out that the Slater spectrum can be calculated recursively with complexity $O(M^2 2^M)$ \citep{Szczecinski22}; for moderate $M\approx 20$, the calculation can be done very easily on average desktop computers.

If we want to estimate the parameters $\alpha$ and $\gamma$ from an ensemble of seasons, we gather the results in sets of matrices $\un{\matW}=\set{\matW_1,\ld,\matW_K}$ and $\un{\matD}=\set{\matD_1,\ld,\matD_K}$, where $\matW_k$ and $\matD_k$, are the win and draw matrices indexed with the season $k$, where $K$ is the total number of seasons in the data set. We assume that the matrices $\matW_k$ and $\matD_k$, conditioned on the order $\brho_k$, are generated according to the same probabilistic model (defined by $\alpha$ and $\gamma$), and that the orders $\brho_k$ are independent between seasons. Then the likelihood of the ensemble of results is given by 
\begin{align}
\label{final.likelihood}
    \PR{\un{\matW}, \un{\matD} | \alpha,\gamma} 
    & = 
    \prod_{k=1}^K \PR{\matW_k, \matD_k|\alpha,\gamma}.
\end{align}

It may be debatable whether multiple comparisons of the \emph{different} teams may share the same parameter ($\alpha$ and $\gamma$) or whether $\brho_k$ are independent (since they can represent the same teams in different seasons). Although these assumptions seem well justified in sports, where comparisons are made according to the same rules and teams' composition changes from season to season, to avoid the debate, we note that, since the same ranking methods are applied across seasons and leagues, the assumption of independence among $\brho_k, k=1,\ld,K$ is implicit in current ranking methods.

Of course, the ``ensemble" of seasons may be defined as we wish. It may mean many seasons of the same league or seasons of different leagues with different values of $M$, or simply one season.

At this point, we can parameterize the model using directly the interpretation of $\alpha$, $\beta$ and $\gamma$ as probabilities, \ie, $q_{\alpha}=e^{\alpha}$, $q_{\beta}=e^{\beta}=1-q_\alpha-q_\gamma$, and $q_{\gamma}=e^{\gamma}$. 

The posterior distribution of $q_\alpha$ and $q_\gamma$ is then given by
\begin{align}
\label{posterior.qq}
    \pdf(q_{\alpha}, q_{\gamma} | \un{\matW}, \un{\matD})
    &\propto
    \PR{\un\matW,\un\matD|\log q_\alpha,\log q_\gamma} \pdf(q_\alpha,q_\gamma),
\end{align}
where the prior $\pdf(q_\alpha,q_\gamma)$ should include knowledge of the constraints, \ie $\pdf(q_\alpha,q_\gamma)\propto \IND{q_\alpha>q_\gamma}$. In plain words, we assume that $q_\alpha$ and $q_\gamma$ are uniformly distributed over the space limited by the constraints $q_\alpha+q_\gamma\leq 1$ (law of total probability) and $q_\alpha>q_\gamma$ (prior $\alpha>\gamma$). On the other hand, to assess how the solution behaves, we do not impose the constraint $\gamma>\beta$. This allows us to obtain negative values of $\hat\xi$.

The estimate of $\xi$ can be then found, for example, as a posterior mean:
\begin{align}
\label{Ex.xi}
    \hat\xi&=\Ex_{q_{\alpha},q_{\gamma}} 
    \big[\xi(q_{\alpha},q_{\gamma})\big],
\end{align}
where 
\begin{align}
    \label{xi.qq}
    \xi(q_{\alpha},q_{\gamma})&=\frac{\log(q_{\gamma}/q_{\beta})}{\log(q_{\alpha}/q_{\beta})}
\end{align}
returns the value of $\xi$ defined in \eqref{xi}, while % but using the variables $q_{\alpha}$ and $q_{\gamma}$ instead of $\alpha$ and $\gamma$.
the uncertainty of the estimation will be characterized by the standard deviation
\begin{align}
\label{xi.std}
    \sigma_\xi&=\sqrt{\Ex\big[\xi^2(q_{\alpha},q_{\gamma})\big]- {\hat\xi}^2}.
\end{align}

The posterior means of the parameters $q_\alpha$, $q_\beta$, $q_\gamma$ can also be obtained in the same manner as
$\hat{q}_{\alpha} = \Ex \left[q_{\alpha}\right]$, $\hat{q}_{\beta} = \Ex \left[q_{\beta}\right]$, and $\hat{q}_{\gamma} = \Ex \left[q_{\gamma}\right]$.

The integration over the distribution \eqref{posterior.qq} required to obtain the expectation $\Ex_{q_\alpha, q_\gamma}[\cd]$ was performed numerically using the trapezoidal rule.

\begin{example}[Inference Results]
\label{Ex:Inference}
Let us consider soccer games in the most important professional European association football leagues.  We use the results of the \gls{epl}, the English Football League Championship (EFL Ch.), the English Football League One (EFL 1), and the English Football League Two (EFL 2); we consider 24 back-to-back seasons 1995/96 -- 2018/2019 chosen so that the number of teams is constant per league ($M=20$ for EPL and $M=24$ for EFL Ch., EFL 1, and EFL 2) and we avoid the pandemic-affected seasons. We also use the results from two top-tier German leagues (Bundesliga's first and second divisions, with $M=18$, seasons 1995/96 -- 2018/2019) and the Spanish ones (La Liga's Primera Division, $M=20$ and Segunda Division, $M=22$, seasons 1997/98 -- 2018/2019). The game results are in the public domain and were obtained from the repository \cite{football-data}.

We next considered the games played when the ranking was based on the scoring rule (2-1-0): these are pre-1981 seasons in the \gls{epl} and pre-1994 seasons in the Bundesliga (except for the corruption-affected season 1971/72 and the season 1991/92 with $M=20$). Data is available at \cite{worldfootball.net}

The estimation results are shown in Table~\ref{tab:kappa}, where we show the estimated probabilities of the win, loss, and draw, $\hat{q}_{\alpha}$, $\hat{q}_{\beta}$, and $\hat{q}_{\gamma}$. The estimated value of the draw value $\hat\xi$ is shown, and the reliability of the estimate may be assessed through the credible interval $(\hat\xi-2\sigma_{\xi}, \hat\xi+2\sigma_{\xi})$.

As we can see, only the results in the Spanish La Liga Segunda Division yield $\hat\kappa=3.2$ which is quite close to the value currently used, $\kappa=3$. The estimation that takes into account the ensemble of all seasons in all leagues yields a draw-value close to $\hat\xi=0.25$. This implies that the scoring rule (5-1-0), where five points are given for winning the game, could match the empirical data. This result explains the title of this work.

On the other hand, the results are quite different in seasons based on the nominal draw-value $\xi=0.5$, \ie for the scoring rule (2-1-0). In fact the estimated draw-value is then close to zero, \ie the empirically justified scoring rule was (1-0-0).

\end{example}

\begin{table}[]
    \centering
    \begin{tabular}{l|c|c|c||c||c}
    League  & 
    $\hat{q}_\alpha$ & $\hat{q}_\gamma$ & $\hat{q}_\beta$ & $\hat{\xi}\pm 2\sigma_{\xi}$ & $\hat{\kappa}=1/\hat{\xi}$\\
    \hline
     EPL    & 0.55 & 0.26 & 0.20  & 0.25 $\pm$ 0.06 & 3.9\\
     EFL Ch.& 0.49 & 0.28 & 0.24  & 0.23 $\pm$ 0.06 & 4.3\\
     EFL 1  & 0.49 & 0.27 & 0.24  & 0.18 $\pm$ 0.07 & 5.6\\
     EFL 2  & 0.48 & 0.28 & 0.25  & 0.17 $\pm$ 0.07 & 5.9\\
Bundesliga  & 0.52 & 0.25 & 0.22  & 0.16 $\pm$ 0.08 & 6.4\\
Bundesliga 2 & 0.48 & 0.28 & 0.24  & 0.18 $\pm$ 0.10 & 5.5\\
La Liga     & 0.53 & 0.25 & 0.22  & 0.13 $\pm$ 0.07 & 7.5\\
La Liga 2   & 0.45 & 0.30 & 0.25  & 0.31 $\pm$ 0.09 & 3.2\\
\hline
Ensemble    & \textbf{0.49} & \textbf{0.27} & \textbf{0.24} & \textbf{0.21 $\pm$ 0.03} & \textbf{4.9}  \\
\multicolumn{6}{}{}\\
        EPL ($<$1981) & 0.49 & 0.25 & 0.26 & -0.04 $\pm$ 0.08 \\
Bundesliga ($<$1994)  & 0.50 & 0.26 & 0.25 & 0.05 $\pm$ 0.09 \\
\cline{1-5}
Ensemble    & \textbf{0.49} & \textbf{0.255} & \textbf{0.255} & \textbf{0.00 $\pm$ 0.06} &  \\

\end{tabular}
        \caption{Estimates of the model parameters: $\hat{q}_\alpha$ is the estimated probability that the better team wins, $\hat{q}_\beta$ is the probability it loses, and $\hat{q}_\gamma$ -- the probability it draws. The estimated draw-value $\hat\xi$ is shown together with double standard deviation expressing uncertainty; the estimate of the win-value  $\hat\kappa$ is obtained by inverting $\hat{\xi}$ (but not shown when $\hat\xi\approx 0$, \ie in the case of pre-1981 games of the \gls{epl} and pre-1994 games of Bundesliga).}
    \label{tab:kappa}
\end{table}

%%%%%%%%%
\section{Conclusions}\label{Sec:Conclusions}
The probabilistic model we introduced, together with the estimation rule we proposed, give an unambiguous interpretation of the current ranking methods based on the cumulative score; it also leads to a new ranking, suitable for finding the worst teams. 

More importantly, the model provides a basis for statistical inference of the draw value $\xi$, as we did using data from association football (soccer) but, the proposed estimation methodology can be applied in win/draw/loss games as long as we are able to implement the numerical routines to find the Slater spectrum of the matrix $\matW$ defining the wins in the season (\ie for ``moderate'' value of $M$).

\textbf{Simplified rules}

If the win-value $\kappa$ is ``large'', the ranking strategies can be significantly simplified. In particular, to identify the (most likely the) best teams, the number of wins $w_n$ is sufficient, and the number of draws becomes a secondary ranking criterion: when the teams have the same number of wins (a tie), the larger number of draws indicates the better team. Similarly, to identify the worst teams, we rank them according to their number of losses $l_m$ (in ascending order), and the last teams are the (most likely the) worst. The tie (equal number of losses) is then broken again using draws.

Of course, if the latter recommendation is to be implemented, some care must be exercised to avoid the problems of overlapping results obtained from rankings based on $w_m$ and $l_m$. This may be done, for example, prioritizing the ``legacy" ranking in cases of problems that are nevertheless unlikely to occur if merely a few best and worst teams are to be identified.

\textbf{Results and recommendations in association football}

The results indicate that the nominal scoring rules in football do not match the empirical data. In particular, in the period under the nominal scoring rule (2-1-0), the data suggested that the rule should be (1-0-0), which essentially means that the conditional probabilities of wins and draws were the same and thus, the value of the draw was very small. On the other hand, under the current rule (3-1-0), the empirical observations suggest the rule (5-1-0).

Thus, it seems that the governing body, by introducing the scoring rule (3-1-0), \ie increasing the nominal value of the win to $\kappa=3$ (or equivalently reducing the nominal draw-value from $\xi=1/2$ to $\xi=1/3$), managed to change the pattern of the results: the (conditional) probability of wins is now smaller than the probability of draws. This is a constructive observation that suggests that we may adjust the ranking methods to match the reality of the game. It should always be the purpose of the ranking methods, but now the quality of the adjustment can be objectively assessed by looking at the difference between the nominal draw-value $\xi$ and its empirical counterpart $\hat\xi$. 

The results also suggest that finding the scoring rule cannot be done merely by observing the games under a particular scoring rule because the latter affects the results. Rather, we should go for an iterative process: we adjust the scoring rule, and next, we verify if the results match it. In case of a significant mismatch, we re-adjust the scoring rule, etc. 

In the case of football, although the empirical scoring rule is (5-1-0), instead of immediately imposing it as a nominal one, it is more prudent to make slow changes. For example, the rule may be first set to (4-1-0) and, observing the results for sufficiently long time (a decade or so), we may reassess the scoring rule: if $\hat{\kappa}\approx 4$ no adjustment will be needed.

\newcommand{\CFilesBib}{Common.Files.Bib}
\ifdefined\ARXIV

\else
%%%%%%%% -> Journal of sport Analytics
\ifdefined\JSA
\bibliographystyle{apacite}  %% JSA
\fi
%%%%%%%% -> Journal of Quantitative Analysis is Sports
\ifdefined\JQAS
\bibliographystyle{abbrvnat}  %% JQAS
\fi

\bibliography{\CFilesBib/references_rank,\CFilesBib/IEEEabrv,\CFilesBib/references_all}
\fi

\end{document}